\documentclass[letterpaper, 10 pt, conference]{ieeeconf}  

\IEEEoverridecommandlockouts                              

\usepackage{amsmath,amssymb,amsfonts}

\usepackage{amsthm}
\usepackage{mathtools}
\usepackage{algorithm,algorithmic}
\usepackage{booktabs}
\usepackage{cite}
\usepackage{hyperref}
\usepackage{xcolor}
\usepackage{graphicx}
\usepackage{tikz}
\usetikzlibrary{arrows.meta,positioning,shapes.geometric,fit,backgrounds}
\hypersetup{colorlinks=true,linkcolor=blue,citecolor=blue,urlcolor=blue}

\usepackage{setspace}
\renewcommand{\baselinestretch}{0.99}

\newtheorem{theorem}{Theorem}

\newtheorem{proposition}[theorem]{Proposition}
\newtheorem{corollary}[theorem]{Corollary}
\newtheorem{assumption}{Assumption}
\newtheorem{remark}{Remark}

\newtheorem{problem}{Problem}

\newcommand{\E}{\mathbb{E}}
\newcommand{\R}{\mathbb{R}}
\newcommand{\Spp}{\mathbb{S}_{++}^n}
\newcommand{\Sp}{\mathbb{S}_{+}^n}
\newcommand{\tr}{\mathrm{tr}}
\newcommand{\range}{\mathrm{range}}
\newcommand{\rank}{\mathrm{rank}}
\newcommand{\Normal}{\mathcal{N}}
\newcommand{\Fty}{\mathcal{F}_t^y}
\newcommand{\Var}{\mathrm{Var}}

\title{Path Integral Control in Gaussian Belief Space \\
for Partially Observed Systems}

\author{Goutam Das\textsuperscript{1}, Takashi Tanaka\textsuperscript{2} 
\thanks{This work is supported by DARPA COMPASS program grant HR0011-25-3-0210 and AFOSR DSCT program grant FA9550-25-1-0347.}
\thanks{
All authors are associated with the Networked Control Systems Lab at Purdue University. Emails: \textsuperscript{1} das347@purdue.edu, and \textsuperscript{2} tanaka16@purdue.edu.
}}

\begin{document}

\maketitle

\begin{abstract}
This paper extends path integral control (PIC) to partially observed systems by formulating the problem in Gaussian belief space.  
PIC relies on the diffusion being proportional to the control channel---the so-called matching condition---to linearize the Hamilton--Jacobi--Bellman equation via the Cole--Hopf transform; we show that this condition fails in infinite-dimensional belief space under non-affine observations.
Restricting to Gaussian beliefs yields a finite-dimensional approximation with deterministic covariance evolution, reducing the problem to stochastic control of the belief mean.  We derive necessary and sufficient conditions for matching in this reduced space, obtain an exact Cole--Hopf linearization with a Feynman--Kac representation, and develop the MPPI-Belief algorithm.  Numerical experiments on a navigation task with state-dependent observation noise demonstrate the effectiveness of MPPI-Belief relative to certainty-equivalent and particle-filter-based baselines.
\end{abstract}


\section{Introduction}
\label{sec:introduction}

\noindent 
Stochastic optimal control under partial observability is a central problem in robotics, autonomous navigation, and aerospace systems~\cite{oguri2024chance,dor2024astroslam,lauri2022partially}.  When the state cannot be measured directly, the controller must reason over a \emph{belief}---a probability distribution over possible states---and plan in belief space, leading to partially observable Markov decision processes (POMDPs) that are generally intractable in continuous state, action, and observation spaces~\cite{platt2010belief,sunberg2018online}.

A powerful class of methods for \emph{fully observed} stochastic control is path integral control (PIC), introduced by Kappen~\cite{kappen2005path} and closely related to linearly-solvable MDPs~\cite{todorov2006linearly}. PIC relies on a restrictive condition---the \emph{matching condition}---requiring that the noise covariance $HH^\top$ be proportional to the control authority $G(x)\,R^{-1}G(x)^\top$. When this holds, the nonlinear HJB equation can be linearized via the Cole--Hopf transform, yielding a Feynman--Kac representation and enabling Monte Carlo solution.
This principle underlies model predictive path integral control (MPPI)~\cite{williams2017information} and has been extended to state-dependent diffusion~\cite{theodorou2010generalized}, constrained formulations~\cite{gandhi2021robust}, and risk-sensitive variants~\cite{yin2023risk}.  All of these methods, however, rely on full state observation.

Under partial observability, the natural state is no longer the physical state but the belief.  This raises a critical question that has received limited attention: \emph{can the matching condition required for Cole--Hopf linearization be satisfied in belief space?}  Existing approaches to partially observed MPPI are primarily implementation-driven and do not analyze whether the matching condition extends to belief space.
Mohamed et~al.~\cite{mohamed2022autonomous} apply MPPI with local costmaps for navigation in unknown cluttered environments; 
Abraham et~al.~\cite{abraham2020model} extend path integral control to learned uncertain dynamics; and Hoshino et~al.~\cite{hoshino2025path} combine a particle filter with per-particle path integral controllers.  In parallel, belief-space planning methods such as LQG-MP~\cite{van2012motion} and maximum-likelihood belief planning~\cite{platt2010belief} exploit Gaussian structure for trajectory optimization, but do not connect this structure to the PIC matching and linearization framework.  Thus, while filtering-based and particle-based approximations exist, whether the matching condition underlying PIC can be satisfied in belief space remains an open question.

This paper addresses this gap by extending PIC to Gaussian belief space and analyzing its validity under partial observability.  
The contributions are as follows:
(i) We show that the PIC matching condition in belief space requires affine observations, establishing a general impossibility result for nonlinear observation models.
(ii) Under a Gaussian approximation, the belief covariance evolves deterministically, reducing the problem to stochastic control of the belief mean. We derive necessary and sufficient range-space conditions under which the matching condition holds in this reduced space, leading to exact Cole--Hopf linearization.
(iii) We obtain a Feynman--Kac representation yielding the MPPI-Belief controller (Algorithm~\ref{alg:mppi}) and introduce a risk-sensitive cost extension.

The paper is organized as follows.
Section~\ref{sec:problem} formulates the partially observed control problem.
Section~\ref{sec:belief_dynamics} derives the belief dynamics and shows that the matching condition cannot be satisfied in full belief space under non-affine observations.
Section~\ref{sec:gaussian} develops a Gaussian belief-space reduction and distinguishes the exact and approximate regimes.
Section~\ref{sec:pic_belief} presents the main theoretical results.
Section~\ref{sec:algorithm} presents the MPPI-Belief algorithm, and Section~\ref{sec:experiments} provides numerical validation.

\subsection{Notation}
\noindent We write $\range(M)$ and $\rank(M)$ for the range and rank of a matrix $M$, respectively; $\Sp$ and $\Spp$ for the cones of $n \times n$ positive semidefinite (PSD) and positive definite (SPD) matrices; $M^\dagger$ for the Moore--Penrose pseudoinverse; and $\Normal(\mu,\Sigma)$ for the Gaussian distribution with mean $\mu$ and covariance $\Sigma$.
\section{Problem Formulation}
\label{sec:problem}

\noindent 
We consider a control-affine stochastic system in continuous time and continuous space, where the state is measured through a noisy output.


The state and observation equations are described by the stochastic differential equations
\begin{subequations}\label{eq:system}
\begin{align}
    dx_t &= f(x_t)\,dt + G(x_t)\,u_t\,dt + H\,dw_t,
        \label{eq:dynamics} \\
    dy_t &= c(x_t)\,dt + \sigma_o\,d\nu_t,
        \label{eq:observation}
\end{align}
\end{subequations}
where $x_t \in \R^n$ is the state, $u_t \in \R^\ell$ is the control input, and $y_t \in \R^p$ is the observation. 
The function $f : \R^n \to \R^n$ describes the uncontrolled drift, $G : \R^n \to \R^{n \times \ell}$ is the control input matrix, and $c : \R^n \to \R^p$ is the observation function. 
The matrix $H \in \R^{n \times m}$ determines the process noise channel, and $\sigma_o \in \R^{p \times p}$ is the observation noise matrix, assumed to be invertible. 
The Brownian motions $w_t \in \R^m$ and $\nu_t \in \R^p$ are mutually independent.
We define the process noise covariance $Q := HH^\top \in \Sp$ and the observation noise covariance $R_o := \sigma_o\sigma_o^\top \in \mathbb{S}_{++}^{p}$.


Let $\Fty := \sigma\{y_s : 0 \leq s \leq t\}$ denote the $\sigma$-algebra generated by the observation history up to time~$t$. An \emph{admissible control} is an $\Fty$-adapted, square-integrable process $\{u_t\}_{t \in [0,T]}$. This means that the control at time $t$ is determined based on the history of noisy observations $\{y_s\}_{0 \leq s \leq t}$, rather than the system state $x_t$. The cost functional is given by
\begin{equation}
    J = \E\left[ \bar{\phi}(x_T) + \int_0^T \left( \bar{q}(x_s)
        + \frac{1}{2} u_s^\top R\, u_s \right) ds \right],
    \label{eq:true_state_objective}
\end{equation} where $\bar{\phi} : \R^n \to \R$ and $\bar{q} : \R^n \to \R$ are the terminal and running state costs, assumed to be non-negative, and $R \in \mathbb{S}_{++}^{\ell}$ is the control cost weight.

The partially observed control problem is formulated as follows.

\begin{problem}
\label{prob:main}
Consider the state and observation equations~\eqref{eq:dynamics}--\eqref{eq:observation} and the cost functional~\eqref{eq:true_state_objective}. Find an admissible control $\{u_t\}_{t \in [0,T]}$ that minimizes the cost~\eqref{eq:true_state_objective}, given only the observation history $\{y_s\}_{0 \leq s \leq t}$.
\end{problem}

This problem is more challenging than the fully observed counterpart, since the controller cannot access the state $x_t$ directly and must instead reason over a probability distribution conditioned on the observation history.


Since $x_t$ is not directly accessible, the cost must be expressed in terms of the conditional distribution of the state given the observations, commonly referred to as the \emph{belief}~\cite{krishnamurthy2016partially}.
The belief-space formulation applies to general partially observed systems and does not require Gaussianity. The Gaussian restriction introduced in Section~\ref{sec:gaussian} is a computational choice that enables the finite-dimensional reduction used in the path integral construction below.
Under a Gaussian belief $\pi_t = \Normal(\mu_t, \Sigma_t)$, we define the expected running cost
\begin{equation}
    q(\mu, \Sigma) :=
      \E_{x \sim \mathcal{N}(\mu, \Sigma)}[\bar{q}(x)],
\end{equation}
and the analogous terminal cost $\phi(\mu, \Sigma)$. 
For quadratic $\bar{q}(x) = \tfrac{1}{2}\|x - x^{\mathrm{ref}}\|_{Q^x}^2$, this yields
\begin{equation}
    q(\mu, \Sigma)
        = \tfrac{1}{2}\|\mu - x^{\mathrm{ref}}\|_{Q^x}^2
        + \tfrac{1}{2}\tr(Q^x\Sigma).
    \label{eq:gauss_cost}
\end{equation}
Thus, the expected cost decomposes into a term evaluated at the belief mean and a term that penalizes state uncertainty through the covariance. 
Dropping the trace term recovers the certainty-equivalent approach of planning on the mean alone, which underestimates the true expected cost whenever $\Sigma$ is non-trivial.

\begin{remark}[Classical PIC Matching]
\label{rem:pic_full}
For the fully observed system~\eqref{eq:dynamics} with cost~\eqref{eq:true_state_objective}, the \emph{matching condition} requires $Q = \lambda\,G(x)\,R^{-1}G(x)^\top$ for some $\lambda > 0$.
This means that the process noise and the control input act through the same subspace of the state space. 
When matching holds, the associated HJB equation admits a Cole--Hopf linearization, which yields a linear PDE with a Feynman--Kac representation and enables Monte Carlo solution. 
Without matching, this linearization is not available. This structure underlies MPPI~\cite{williams2017information}; when matching is only approximately satisfied, practical strategies such as importance-sampling corrections~\cite{williams2018robust} and modified cost formulations~\cite{mohamed2022autonomous} have been proposed.
\end{remark}

\section{Belief Dynamics and Matching Impossibility}
\label{sec:belief_dynamics}

\noindent 
This section derives the belief dynamics for the partially observed system~\eqref{eq:dynamics}--\eqref{eq:observation} and analyzes the structure of the resulting control and diffusion channels in belief space.

Since the state $x_t$ is not directly accessible, the controller must reason over the conditional distribution $\pi_t(\cdot) := \mathbb{P}(x_t \in \cdot \mid \Fty)$, which is a measure-valued stochastic process. Its evolution is
governed by the Kushner--Stratonovich equation (KSE)~\cite{kushner1967dynamical}:
\begin{multline}
    d\pi_t(x)
        =  (\mathcal{L}_t^{u*}\pi_t)(x)\,dt 
        + \pi_t(x)\,
            \bigl(c(x) - \pi_t(c)\bigr)^\top
            R_o^{-1}\,d\tilde{y}_t,
    \label{eq:kse}
\end{multline}
where $\mathcal{L}_t^{u*}$ is the Fokker--Planck operator associated with the controlled dynamics~\eqref{eq:dynamics}, and $d\tilde{y}_t := dy_t - \pi_t(c)\,dt$ is the \emph{innovation process}, whose whitened form $R_o^{-1/2}d\tilde{y}_t$ is a standard $\Fty$-Brownian motion (see~\cite{liptser1977statistics}).

To examine whether the structure underlying path integral
control in the fully observed setting (Remark~\ref{rem:pic_full})
extends to belief space, we rewrite the belief dynamics
\eqref{eq:kse} in the form
$d\pi = a^u(\pi)\,dt + b(\pi)\,d\tilde{y}$. In particular, this yields
\begin{subequations}\label{eq:belief_channels}
\begin{align}
    a^u(\pi)(x)
        &= a^0(\pi)(x)
        - \nabla \cdot \bigl(G(x)\,u\;\pi(x)\bigr),
        \label{eq:belief_drift} \\
    b(\pi)(x)
        &= \pi(x)\,
            \bigl(c(x) - \pi(c)\bigr)^\top R_o^{-1}.
        \label{eq:belief_diffusion}
\end{align}
\end{subequations}
Thus, \eqref{eq:belief_drift} shows that the drift depends on
the control through the divergence term
$-\nabla\cdot(G(x)\,u\;\pi(x))$, whereas
\eqref{eq:belief_diffusion} shows that the diffusion is
independent of $u$.


In the fully observed setting (Remark~\ref{rem:pic_full}), matching requires that the diffusion and control channels be aligned. 
An analogous condition in belief space would require that the observation-driven diffusion~\eqref{eq:belief_diffusion} be proportional to the control-induced perturbation~\eqref{eq:belief_drift}. 
If such a condition held, the corresponding belief-space HJB equation would admit a Cole--Hopf linearization, yielding a Feynman--Kac representation and enabling sampling-based control, as in the fully observed case. The following result characterizes when this is possible.

\begin{theorem}[Affine Observation Characterization]
\label{thm:matching_feasibility} 
Consider the system~\eqref{eq:dynamics}--\eqref{eq:observation} with constant $G \in \R^{n \times \ell}$ and observation function $c : \R^n \to \R^p$. Restrict the belief to the Gaussian family $\{\pi_\mu = \Normal(\mu, \Sigma) : \mu \in \R^n\}$ with fixed $\Sigma \succ 0$. Then the observation-driven diffusion lies in the tangent space of this Gaussian mean manifold if and only if
\begin{equation}
    c(x) - \E_{\pi_\mu}[c(x)] = A(\mu)(x - \mu)
    \quad \text{for all } \mu \in \R^n,
    \label{eq:matching_condition}
\end{equation}
for some matrix-valued function $A(\mu) \in \R^{p \times n}$. Moreover, this holds if and only if $c$ is affine:
$c(x) = Ax + b$ for some constant matrix $A \in \R^{p \times n}$ and vector $b \in \R^p$.
\end{theorem}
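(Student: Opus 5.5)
The plan is to compute the tangent space of the Gaussian mean manifold explicitly, compare it with the diffusion direction \eqref{eq:belief_diffusion} evaluated at $\pi=\pi_\mu$, and then use a polynomial-growth/integration argument to pass from \eqref{eq:matching_condition} to affinity of $c$. Since $\Sigma$ is fixed, the manifold is parametrized by $\mu$ alone and the density is $\pi_\mu(x)\propto\exp(-\tfrac12(x-\mu)^\top\Sigma^{-1}(x-\mu))$. Its tangent vectors at $\pi_\mu$ are
\begin{equation*}
\partial_{\mu}\pi_\mu(x)\,v=\pi_\mu(x)\,(x-\mu)^\top\Sigma^{-1}v,\qquad v\in\R^n .
\end{equation*}
Hence the tangent space at $\pi_\mu$ consists exactly of the functions $\pi_\mu(x)\,\ell(x-\mu)$ with $\ell$ linear. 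The observation-driven diffusion in the $k$-th whitened innovation direction is the $k$-th column of $b(\pi_\mu)(x)=\pi_\mu(x)(c(x)-\pi_\mu(c))^\top R_o^{-1}$. Requiring all $p$ columns to lie in the tangent space and dividing by $\pi_\mu(x)>0$ gives
\begin{equation*}
R_o^{-1}(c(x)-\pi_\mu(c))=B(\mu)(x-\mu)
\end{equation*}
for some $B(\mu)\in\R^{p\times n}$. Because $R_o$ is invertible, this is equivalent to \eqref{eq:matching_condition} with $A(\mu)=R_o B(\mu)$. The converse direction is the same computation read backwards. I would state the identity as holding $\pi_\mu$-a.e., which, since the Gaussian has full support and $c$ is assumed continuous, means for all $x$.

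For the second equivalence, the affine case is immediate: if $c(x)=Ax+b$, then $\pi_\mu(c)=A\mu+b$, and \eqref{eq:matching_condition} holds with $A(\mu)\equiv A$. For the converse, fix any $\mu$. Then \eqref{eq:matching_condition} says that for every $x\in\R^n$,
\begin{equation*}
c(x)=\pi_\mu(c)+A(\mu)(x-\mu),
\end{equation*}
so $c$ is affine on all of $\R^n$ with slope $A(\mu)$. A single $\mu$ therefore already suffices. Comparing two values $\mu,\mu'$, the slopes must coincide, $A(\mu)=A(\mu')=:A$, since an affine function has a unique linear part. Setting $b:=\pi_\mu(c)-A\mu$ completes the argument; consistency of $b$ across $\mu$ follows because $\pi_\mu(c)=A\mu+b$ for affine $c$.

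The main obstacle is not the algebra but making "lies in the tangent space" precise in the infinite-dimensional setting. This requires fixing a function space in which $\pi_\mu$ and $b(\pi_\mu)$ live (e.g.\ integrable densities, with $c$ having enough moments that $\pi_\mu(c)$ is finite) and justifying that the tangent space is exactly $\{\pi_\mu\,\ell(x-\mu)\}$. I would handle this by taking the finite-dimensional embedding $\mu\mapsto\pi_\mu$ into $L^1$, checking that it is differentiable with the derivative above, and noting that the tangent space is its $n$-dimensional image. Membership of $b(\pi_\mu)$ then reduces to the pointwise identity after dividing by the strictly positive density. Beyond that, I only need to note that $G$ constant and $\Sigma$ fixed play no role in this step other than defining the manifold. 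The control-drift term in \eqref{eq:belief_drift}, $-\nabla\cdot(Gu\,\pi_\mu)=\pi_\mu\,(x-\mu)^\top\Sigma^{-1}Gu$, is itself tangent, which is why the tangent space is the relevant target for matching.
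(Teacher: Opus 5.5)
Your proposal is correct and follows the paper's proof: both identify the mean-tangent directions as $\pi_\mu(x)\,(x-\mu)^\top\Sigma^{-1}v$, divide by the positive density to reduce tangency of $b(\pi_\mu)$ to \eqref{eq:matching_condition}, and then deduce that $c$ is affine. The differences are refinements on your side. You track $R_o^{-1}$ explicitly via $A(\mu)=R_oB(\mu)$, which the paper absorbs into ``proportional to'', and you make the a.e.\ versus continuity issue explicit. You also observe that the identity at a single $\mu$ already exhibits $c$ as affine, whereas the paper subtracts the identity at two means $\mu_1\neq\mu_2$ to show $A(\mu)$ is constant before substituting back.
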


\begin{proof}
Under the Gaussian family $\pi_\mu = \Normal(\mu, \Sigma)$ with fixed $\Sigma \succ 0$, the mean-tangent direction is proportional to $\Sigma^{-1}(x - \mu)\,\pi_\mu(x)$, while the observation-induced diffusion~\eqref{eq:belief_diffusion} is proportional to $\pi_\mu(x)\bigl(c(x) - \bar{c}(\mu)\bigr)$, where
$\bar{c}(\mu) := \E_{\pi_\mu}[c(x)]$.
Thus, the observation-driven diffusion lies in the Gaussian mean tangent space if and only if
$c(x) - \bar{c}(\mu) = A(\mu)(x - \mu)$
for some $A(\mu) \in \R^{p \times n}$.
Evaluating at $\mu_1 \neq \mu_2$ and subtracting:
\[
\bar{c}(\mu_2) - \bar{c}(\mu_1)
    = \bigl[A(\mu_1) - A(\mu_2)\bigr]\,x
    - A(\mu_1)\mu_1 + A(\mu_2)\mu_2.
\]
The left-hand side is independent of $x$, so $A(\mu_1) = A(\mu_2)$.
Since $\mu_1, \mu_2$ are arbitrary, $A(\mu) \equiv A$ is constant.
Substituting back gives $c(x) = Ax + (\bar{c}(\mu) - A\mu)$;
since the left-hand side is independent of $\mu$, the quantity
$\bar{c}(\mu) - A\mu$ is constant, yielding $c(x) = Ax + b$.
The converse is immediate.
\end{proof}

The restriction to fixed $\Sigma$ in
Theorem~\ref{thm:matching_feasibility} isolates the role of
the observation function $c$; the time-varying covariance case
is addressed in Section~\ref{sec:gaussian}, where matching
feasibility is analyzed for the resulting time-varying
diffusion $D_t$ (Proposition~\ref{prop:matching_feasibility}).
For state-dependent $G(x)$, the same difficulty persists, as
the observation and control channels are generically misaligned.

\begin{remark}
In particular, every scalar linear-Gaussian system
($c(x)=ax+b$, $G \neq 0$) satisfies
Theorem~\ref{thm:matching_feasibility} and, provided
$D_t \neq 0$, also satisfies the range condition of
Proposition~\ref{prop:matching_feasibility}. Thus, belief-space
PIC applies exactly in this case.
\end{remark}

This motivates restricting the belief to a finite-dimensional
family of distributions.

\section{Gaussian Belief-Space Approximation}
\label{sec:gaussian}

\noindent In this section, we focus on the Gaussian family $\pi_t = \Normal{(\mu_t, \Sigma_t)}$ and derive the resulting mean and covariance dynamics.

We state the following classical result explicitly, as the deterministic evolution of $\Sigma_t$ is the property that enables the belief-space PIC reduction developed in Section~\ref{sec:pic_belief}.

\begin{proposition}[Deterministic Covariance: Linear Case]
\label{thm:linear_cov}
For the linear system $f(x) = Ax$, $G(x) = B$, and $c(x) = Cx$ in~\eqref{eq:dynamics}--\eqref{eq:observation}, the conditional distribution $\pi_t = \Normal(\mu_t, \Sigma_t)$ is exactly Gaussian, with
\begin{subequations}\label{eq:linear_belief}
\begin{align}
    d\mu_t
        &= (A\mu_t + Bu_t)\,dt + K_t\,d\tilde{y}_t,
        \label{eq:mean_linear_kb} \\
    \dot{\Sigma}_t
        &= A\Sigma_t + \Sigma_t A^\top + Q
        - \Sigma_t C^\top R_o^{-1} C\Sigma_t,
        \label{eq:riccati}
\end{align}
\end{subequations}
where $K_t := \Sigma_t C^\top R_o^{-1}$ is the Kalman gain.
The innovation process in~\eqref{eq:kse} specializes to $d\tilde{y}_t = dy_t - C\mu_t\,dt$ in the linear-Gaussian case, since $\pi_t(c) = C\mu_t$.
The Riccati equation~\eqref{eq:riccati} is a deterministic ODE, independent of both the observations and the control input.
\end{proposition}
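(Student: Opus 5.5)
The plan is to take the Kalman--Bucy route: first show that the conditional law is Gaussian, and then read off the mean and covariance equations from the Kushner--Stratonovich equation~\eqref{eq:kse}. For Gaussianity we use the standard argument for linear systems, assuming $\pi_0=\Normal(\mu_0,\Sigma_0)$. Since $u_t$ is $\Fty$-adapted, the term $\int_0^t e^{A(t-s)}Bu_s\,ds$ is observation-measurable. We can therefore split $x_t = \bar x_t + \xi_t$ and $y_t = \bar y_t + \eta_t$, where $(\bar x,\bar y)$ is driven only by $u$ and $(\xi,\eta)$ is the uncontrolled linear system started from $\Normal(\mu_0,\Sigma_0)$. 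Because $\bar y$ is a functional of the observation path, $\Fty$ coincides with $\sigma\{\eta_s: s\le t\}$. This equality is the delicate point, and we would invoke the standard result in~\cite{liptser1977statistics} for it rather than re-prove it. The pair $(\xi_t,\eta_{[0,t]})$ is jointly Gaussian, so the law of $\xi_t$ given $\eta_{[0,t]}$ is Gaussian, and hence so is $\pi_t$.

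With Gaussianity in hand, we test~\eqref{eq:kse} against $\varphi(x)=x$ and $\varphi(x)=xx^\top$. For the mean, the generator gives $\mathcal{L}^u x = Ax+Bu$, so the drift is $A\mu_t+Bu_t$. The diffusion term is $\pi_t\bigl(x(Cx-C\mu_t)^\top\bigr)R_o^{-1} = \Sigma_t C^\top R_o^{-1} = K_t$, using $\pi_t(c)=C\mu_t$. This gives~\eqref{eq:mean_linear_kb} together with the stated form of $d\tilde y_t$.

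For the covariance, we apply Itô's formula to $\Sigma_t = \pi_t(xx^\top)-\mu_t\mu_t^\top$. The drift of $\pi_t(xx^\top)$ from the Fokker--Planck part is $A\pi_t(xx^\top)+\pi_t(xx^\top)A^\top + B u\mu^\top+\mu u^\top B^\top + Q$. The innovation coefficient of $\pi_t(xx^\top)$ is $\pi_t\bigl(xx^\top (x-\mu)^\top\bigr)C^\top R_o^{-1}$, which involves third central moments. These vanish for a Gaussian, so the coefficient reduces to $(\mu K_t^\top + K_t\mu^\top)$-type terms. Subtracting $d(\mu\mu^\top)$ cancels the control terms and the $\mu$-dependent parts. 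The Itô correction $-K_t R_o K_t^\top\,dt = -\Sigma_t C^\top R_o^{-1}C\Sigma_t\,dt$ comes from the quadratic variation of $\mu_t$, since $d\langle\tilde y\rangle_t = R_o\,dt$. The stochastic parts of $d\pi_t(xx^\top)$ and $d(\mu\mu^\top)$ cancel exactly, leaving the Riccati ODE~\eqref{eq:riccati}.

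Finally, none of $A,C,Q,R_o$ depends on $y$ or $u$, so~\eqref{eq:riccati} is a deterministic matrix ODE. It has a unique solution because its right-hand side is locally Lipschitz, and global existence on $[0,T]$ follows from the standard bound $0\preceq\Sigma_t\preceq$ the open-loop Lyapunov solution. The main obstacle is the rigorous Gaussianity step, specifically the filtration equality under feedback control. The moment computations are routine once the vanishing of the Gaussian third moments is used.
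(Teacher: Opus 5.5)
Your route differs from the paper's, and it is more self-contained. The paper's proof is a single appeal to Kalman--Bucy theory in~\cite{liptser1977statistics} for both Gaussianity and the equations, and it treats determinism of \eqref{eq:riccati} as evident. You instead get Gaussianity from the decomposition $x=\bar x+\xi$, $y=\bar y+\eta$, and then derive \eqref{eq:mean_linear_kb}--\eqref{eq:riccati} from the paper's own equation~\eqref{eq:kse} by testing against $x$ and $xx^\top$. These computations are correct:
\begin{itemize}
\item The gain $\pi_t\bigl(x(x-\mu_t)^\top\bigr)C^\top R_o^{-1}=\Sigma_tC^\top R_o^{-1}$ needs no Gaussianity.
\item For the second moment, the vanishing third central moment leaves the martingale part $\mu_t(K_t\,d\tilde y_t)^\top+K_t\,d\tilde y_t\,\mu_t^\top$. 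This cancels against the martingale part of $d(\mu_t\mu_t^\top)$, and the control terms also cancel.
\item The It\^o term $K_tR_oK_t^\top=\Sigma_tC^\top R_o^{-1}C\Sigma_t$ supplies the quadratic term of the Riccati equation.
\end{itemize}
Compared with the bare citation, this gains four things: it explains why the moment closure is exact, it ties the result to~\eqref{eq:kse}, it makes explicit the Gaussian prior that the statement silently needs, and it establishes global well-posedness of~\eqref{eq:riccati} on $[0,T]$. Also, whenever your decomposition is valid, determinism and control-independence of $\Sigma_t$ come for free. In that case $\Sigma_t$ is simply the conditional covariance of the uncontrolled jointly Gaussian pair $(\xi_t,\eta_{[0,t]})$, so the moment computation is only needed to identify the ODE.

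The one step that does not go through as phrased is the filtration identity. Only the inclusion $\sigma\{\eta_s:s\le t\}\subseteq\Fty$ is immediate. The reverse inclusion means recovering $y$ from $\eta$ through the closed-loop relation $y=\eta+\int_0^{\cdot}C\bar x_s(y)\,ds$, which is a strong-solution question. It holds under extra restrictions on the feedback (e.g., Lipschitz dependence on the observation path), but it is not a result you can cite for arbitrary $\Fty$-adapted square-integrable $u$. This is exactly why classical separation-theorem proofs either restrict the admissible controls or define them relative to the uncontrolled observation filtration.

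The fix is to bypass the identity altogether. In the conditionally Gaussian filtering theorem of~\cite{liptser1977statistics}, the term $Bu_t$ enters as a drift that may be any non-anticipative functional of the observation path. Under mild integrability conditions, a Gaussian prior then yields a Gaussian $\pi_t$ directly, with the filter equations \eqref{eq:mean_linear_kb}--\eqref{eq:riccati} and no filtration identity needed. With that citation in place of your decomposition for the Gaussianity step, your moment computation from~\eqref{eq:kse} completes a valid proof.
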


\begin{proof}
The exact Gaussianity of $\pi_t$ and the dynamics \eqref{eq:linear_belief} follow from Kalman--Bucy filtering theory~\cite{liptser1977statistics}. The Riccati equation~\eqref{eq:riccati} is deterministic by construction.
\end{proof}

A key consequence of Proposition~\ref{thm:linear_cov} is that the stochastic component of the mean dynamics is driven entirely by the innovation process, with effective covariance
\begin{equation}
    D_t := K_t\,R_o\,K_t^\top
    = \Sigma_t C^\top R_o^{-1} C\Sigma_t.
    \label{eq:Dt_def}
\end{equation}
The matrix $D_t$ quantifies the strength of the innovation-driven fluctuations in the belief mean: larger values of $D_t$ indicate that new observations have a stronger effect on the state estimate, whereas $D_t = 0$ implies that the belief mean is not updated by the observations. Let $r_t := \rank(D_t)$ and let
$L_t \in \R^{n \times r_t}$ satisfy $D_t = L_t L_t^\top$.

Since the quadratic variation of $K_t\,d\tilde{y}_t$ is $D_t\,dt$, factoring $D_t = L_t L_t^\top$ and defining $\beta_t$ through the innovation process allow the mean dynamics~\eqref{eq:mean_linear_kb} to be written as
\begin{equation}
    d\mu_t = (A\mu_t + Bu_t)\,dt + L_t\,d\beta_t,
    \label{eq:mean_linear}
\end{equation}
where $\beta_t \in \R^{r_t}$ is a standard Brownian motion adapted to $\Fty$. Since $\Sigma_t$ is deterministic, both $D_t$ and $L_t$ are deterministic functions of time, so \eqref{eq:mean_linear} is a standard SDE driven by the innovation Brownian motion.

\begin{remark}[Nonlinear Case: EKF Approximation]
\label{rem:nonlinear_cov}
For nonlinear systems~\eqref{eq:dynamics}--\eqref{eq:observation}, linearizing $f$ and $c$ about a nominal trajectory $\bar{\mu}_t$ with $A_t := \partial f/\partial x|_{\bar{\mu}_t}$ and $C_t := \partial c/\partial x|_{\bar{\mu}_t}$ yields the EKF analogue of~\eqref{eq:riccati}:
\begin{equation}
    \dot{\Sigma}_t
        = A_t\Sigma_t + \Sigma_t A_t^\top + Q
        - \Sigma_t C_t^\top R_o^{-1} C_t\Sigma_t,
    \label{eq:ekf_riccati}
\end{equation}
which is a deterministic ODE once $\bar{\mu}_t$ is specified. This follows from a first-order Taylor expansion of the drift and observation functions, under which the KSE reduces formally to the Kalman--Bucy form~\eqref{eq:linear_belief} with time-varying Jacobians $A_t$ and $C_t$ (see, e.g.,~\cite{liptser1977statistics}).
The neglected residual is formally second-order in $\|\mu_t - \bar{\mu}_t\|$, with constants depending on the local curvature of $f$ and $c$; a rigorous bound is beyond the scope of this paper.
\end{remark}

\begin{assumption}[Deterministic Gain Schedule]\label{assum:gain_schedule}
Over $[0, T]$, the quantities $K_t$, $\Sigma_t$, and $D_t := K_t R_o K_t^\top$ are assumed to be predetermined deterministic functions of time, obtained by solving \eqref{eq:riccati} in the linear case or \eqref{eq:ekf_riccati} in the nonlinear case about a nominal trajectory. For nonlinear systems, we write $\bar{G}_t := G(\bar{\mu}_t)$. For linear systems, $\bar{G}_t = B$ for all $t$.
\end{assumption}

Under Assumption~\ref{assum:gain_schedule}, the belief-mean dynamics are approximated by
\begin{equation}
    d\mu_t = \bigl(f(\mu_t) + \bar{G}_t\,u_t\bigr)\,dt
        + L_t\,d\beta_t,
    \label{eq:mean_nonlinear}
\end{equation}
where $D_t = L_t L_t^\top$ is deterministic. Since $\Sigma_t$ is a known function of time, the covariance contribution to the cost becomes a deterministic time-dependent term. Accordingly, we define the \emph{reduced running cost} $q_t(\mu) := q(\mu, \Sigma(t))$ and the \emph{reduced terminal cost} $\phi_T(\mu) := \phi(\mu, \Sigma(T))$.


\section{Path Integral Control in Gaussian Belief Space}
\label{sec:pic_belief}

\noindent This section develops the main theoretical results in Gaussian belief space: matching feasibility, Cole--Hopf linearization, and the Feynman--Kac representation. While the Cole--Hopf and Feynman--Kac machinery is classical in fully observed PIC, the present setting is different because the reduced belief-mean dynamics~\eqref{eq:mean_nonlinear} are driven by the diffusion $D_t$ induced by the observation channel rather than by the process noise. Accordingly, the belief-space matching condition involves the observation-driven diffusion $D_t$ rather than the process noise $Q$, and its feasibility is characterized by
Proposition~\ref{prop:matching_feasibility}.

\subsection{HJB Equation for the Belief Mean}
\label{sec:hjb}

\noindent Since $\Sigma_t$ is deterministic under Assumption~\ref{assum:gain_schedule}, the stochastic optimal control problem for the belief-mean dynamics~\eqref{eq:mean_nonlinear} under the reduced costs $q_t$ and $\phi_T$ reduces to a standard finite-dimensional problem. Let $V(t, \mu)$ denote the corresponding optimal cost-to-go, which satisfies the HJB equation:
\begin{multline}
    -\partial_t V
        = \min_u \biggl\{
            q_t(\mu)
            + \tfrac{1}{2}u^\top R\,u \\
        + (\nabla_\mu V)^\top
            \bigl(f(\mu) + \bar{G}_t\,u\bigr)
        + \tfrac{1}{2}\tr(D_t\,\nabla_{\mu\mu}^2 V)
        \biggr\}
    \label{eq:hjb}
\end{multline}
with $V(T, \mu) = \phi_T(\mu)$.  First-order optimality yields $u^* = -R^{-1}\bar{G}_t^\top\,\nabla_\mu V$, and substitution into~\eqref{eq:hjb} gives the minimized HJB:
\begin{multline}
    -\partial_t V
        = q_t(\mu)
        + (\nabla_\mu V)^\top f(\mu)
        + \tfrac{1}{2}\tr(D_t\,\nabla_{\mu\mu}^2 V) \\
        - \tfrac{1}{2}(\nabla_\mu V)^\top
            \bar{G}_t R^{-1}\bar{G}_t^\top\,\nabla_\mu V.
    \label{eq:hjb_opt}
\end{multline}
The quadratic gradient term in~\eqref{eq:hjb_opt} renders the equation nonlinear. The next subsection identifies conditions under which this term can be removed by the Cole--Hopf transformation.

\subsection{Matching Condition and Feasibility}
\label{sec:matching_gaussian}

\noindent In the fully observed setting (Remark~\ref{rem:pic_full}), the Cole--Hopf transformation requires that the diffusion covariance equal a scaled version
of the control authority matrix $G R^{-1} G^\top$. We now state the analogous requirement for the belief-mean dynamics~\eqref{eq:mean_nonlinear}.  Given this condition, the Cole--Hopf linearization follows by the same algebraic mechanism as in the fully observed case.  The non-trivial question is whether and when this condition can be satisfied in belief space;
this is addressed by Proposition~\ref{prop:matching_feasibility} below.

\begin{assumption}[Scheduled Matching]
\label{assum:matching}
The belief-mean diffusion matrix $D_t$~\eqref{eq:Dt_def} and the scheduled control matrix $\bar{G}_t$ satisfy, for each $t \in [0, T]$,
\begin{equation}
    D_t = \lambda\,\bar{G}_t\,R^{-1}\bar{G}_t^\top,
    \label{eq:matching}
\end{equation}
where $\lambda > 0$ is the path integral temperature.
\end{assumption}

\begin{proposition}[Matching Feasibility] \label{prop:matching_feasibility}
For each $t \in [0,T]$:

\emph{(i)} There exists a symmetric matrix $W \succeq 0$ such that $D_t = \lambda\,\bar{G}_t W \bar{G}_t^\top$ if and only if $\range(D_t) \subseteq \range(\bar{G}_t)$.

\emph{(ii)} There exists $R \succ 0$ satisfying \eqref{eq:matching} if and only if $\range(D_t) = \range(\bar{G}_t)$.\end{proposition}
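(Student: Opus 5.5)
The plan is to prove both parts by pure linear algebra at a fixed $t$, writing $G := \bar{G}_t$, $D := D_t \succeq 0$, and using the Moore--Penrose pseudoinverse $G^\dagger$. I will use two standard facts throughout. First, $P := GG^\dagger$ is the orthogonal projector onto $\range(G)$. Second, $I - G^\dagger G$ is the orthogonal projector onto the null space of $G$, which is the orthogonal complement of $\range(G^\top)$ in $\R^\ell$.

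\emph{Part (i).} For necessity, if $D = \lambda G W G^\top$ then every vector in $\range(D)$ has the form $G(\lambda W G^\top z)$, so $\range(D) \subseteq \range(G)$. For sufficiency, I will take the explicit candidate
\[
W_0 := \lambda^{-1} G^\dagger D (G^\dagger)^\top ,
\]
which is symmetric and positive semidefinite because $D \succeq 0$. Then
\[
G W_0 G^\top = \lambda^{-1} P D P^\top .
\]
Since $\range(D) \subseteq \range(G)$, we have $PD = D$. Because $D$ is symmetric, this also gives $DP = D$. Hence $\lambda G W_0 G^\top = D$.

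\emph{Part (ii).} For necessity, suppose $D = \lambda G R^{-1} G^\top$ with $R \succ 0$. Then $D = \lambda (G R^{-1/2})(G R^{-1/2})^\top$. This gives $\range(D) = \range(G R^{-1/2})$, and since $R^{-1/2}$ is invertible this equals $\range(G)$.

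For sufficiency, the main obstacle is that $G$ may have a nontrivial null space (e.g.\ $\ell > n$ or rank-deficient $G$). In that case $W_0$ from part (i) is only semidefinite and cannot itself be inverted to give $R$. The fix is to add the null-space projector:
\[
W := W_0 + (I - G^\dagger G).
\]
Since $G(I - G^\dagger G) = 0$, we still have $\lambda G W G^\top = D$. It then remains to show $W \succ 0$, after which $R := W^{-1}$ works.

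To show $W \succ 0$, I will use the orthogonal decomposition $\R^\ell = \range(G^\top) \oplus \ker G$. I will check three things.
\begin{itemize}
\item $W_0$ annihilates $\ker G$. This holds because $\ker (G^\dagger)^\top = \range(G^\dagger)^\perp = \range(G^\top)^\perp = \ker G$.
\item $I - G^\dagger G$ vanishes on $\range(G^\top)$ and is the identity on $\ker G$. So $W$ is block diagonal with respect to this decomposition, and the $\ker G$ block is the identity.
\item $W_0$ is positive definite on $\range(G^\top)$. Take $0 \neq v = G^\top z$ and replace $z$ by $Pz$, which gives the same $v$. Then $Pz \neq 0$ and $Pz \in \range(G) = \range(D)$, so
\[
v^\top W_0 v = \lambda^{-1} (Pz)^\top D (Pz) > 0,
\]
because $D$ is positive definite when restricted to its own range.
\end{itemize}
This last point is exactly where the equality $\range(D) = \range(G)$, rather than mere inclusion, is used.
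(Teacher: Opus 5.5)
Your proof is correct and follows the same route as the paper. Your $W_0=\lambda^{-1}G^\dagger D(G^\dagger)^\top$ is the paper's $MM^\top/\lambda$ with the particular factor $M=\bar G_t^\dagger D_t^{1/2}$, and adding $I-G^\dagger G$ is the paper's ``positive multiple of the projector onto $\ker(\bar G_t)$''. You also explicitly verify positive definiteness on $\range(G^\top)$, which is where $\range(D_t)=\range(\bar G_t)$ is used and which the paper's sketch leaves implicit.
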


\begin{proof}
For part~(i), note that $\range(\bar{G}_t W \bar{G}_t^\top) \subseteq \range(\bar{G}_t)$ for any $W$, so the range inclusion is necessary. Conversely, if $\range(D_t) \subseteq \range(\bar{G}_t)$, then $D_t^{1/2}$ admits a factorization $D_t^{1/2} = \bar{G}_t M$ for some matrix $M$, and setting $W = MM^\top / \lambda$ yields the result.

For part~(ii), if $R \succ 0$ then $R^{-1} \succ 0$, so $\range(\bar{G}_t R^{-1}\bar{G}_t^\top) = \range(\bar{G}_t)$, which forces range equality.Conversely, given $\range(D_t) = \range(\bar{G}_t)$, one can construct an SPD matrix $R^{-1}$ by solving on $\range(\bar{G}_t)$ via the SVD and extending to a full-rank matrix using a positive multiple of the projector onto $\ker(\bar{G}_t)$.
\end{proof}
Here $R \succ 0$ is the fixed control cost weight from~\eqref{eq:true_state_objective}. Since $D_t$ and $\bar{G}_t$ vary with $t$, the condition~\eqref{eq:matching} is verified pointwise; $W$ in part~(i) is an auxiliary
existence variable, not a design parameter.

\subsection{Cole--Hopf Linearization}
\label{sec:cole_hopf}

\noindent Define the \emph{desirability function} 
$\Psi(t,\mu) := \exp(-V(t,\mu)/\lambda)$, so that $V(t,\mu) = -\lambda\log\Psi(t,\mu)$.

\begin{theorem}[Exact Linearization via Cole--Hopf] \label{thm:linearization}
Under Assumptions~\ref{assum:gain_schedule} and~\ref{assum:matching}, the substitution $V(t,\mu) = -\lambda\log\Psi(t,\mu)$ converts the HJB equation
\eqref{eq:hjb_opt} into the linear PDE
\begin{equation}
    -\partial_t\Psi
        = -\frac{q_t(\mu)}{\lambda}\,\Psi
        + (\nabla_\mu\Psi)^\top f(\mu)
        + \tfrac{1}{2}\tr(D_t\,\nabla_{\mu\mu}^2\Psi),
    \label{eq:linear_pde}
\end{equation}
with terminal condition,
$\Psi(T,\mu) = \exp(-\phi_T(\mu)/\lambda).$

\end{theorem}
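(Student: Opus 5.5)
The plan is to substitute $V = -\lambda\log\Psi$ directly into the minimized HJB equation \eqref{eq:hjb_opt} and show that, under Assumption~\ref{assum:matching}, the two quadratic gradient terms cancel exactly.

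First compute the derivatives of $V$ in terms of $\Psi$ (with $\Psi>0$ by construction):
\begin{equation*}
\partial_t V = -\lambda\frac{\partial_t\Psi}{\Psi},\qquad
\nabla_\mu V = -\lambda\frac{\nabla_\mu\Psi}{\Psi},\qquad
\nabla^2_{\mu\mu}V = -\lambda\frac{\nabla^2_{\mu\mu}\Psi}{\Psi}+\lambda\frac{\nabla_\mu\Psi\,\nabla_\mu\Psi^\top}{\Psi^2}.
\end{equation*}
Then evaluate each term of \eqref{eq:hjb_opt}. The drift term becomes $-\lambda(\nabla_\mu\Psi)^\top f/\Psi$. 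The diffusion term becomes
\begin{equation*}
-\tfrac{\lambda}{2}\tr(D_t\nabla^2_{\mu\mu}\Psi)/\Psi+\tfrac{\lambda}{2}(\nabla_\mu\Psi)^\top D_t\nabla_\mu\Psi/\Psi^2 .
\end{equation*}
The control term becomes $-\tfrac{\lambda^2}{2}(\nabla_\mu\Psi)^\top\bar G_tR^{-1}\bar G_t^\top\nabla_\mu\Psi/\Psi^2$.

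The key step is to invoke \eqref{eq:matching}, $D_t=\lambda\bar G_tR^{-1}\bar G_t^\top$ at each $t$. With it, the control term equals $-\tfrac{\lambda}{2}(\nabla_\mu\Psi)^\top D_t\nabla_\mu\Psi/\Psi^2$, which cancels the quadratic remainder from the Hessian term exactly. Assumption~\ref{assum:gain_schedule} is used here: $D_t$ and $\bar G_t$ are deterministic functions of time only, independent of $\mu$. The identity therefore holds pointwise in $(t,\mu)$ and introduces no extra derivative terms. What remains is
\begin{equation*}
\lambda\frac{\partial_t\Psi}{\Psi}=q_t-\lambda\frac{(\nabla_\mu\Psi)^\top f}{\Psi}-\frac{\lambda}{2}\frac{\tr(D_t\nabla^2_{\mu\mu}\Psi)}{\Psi}.
\end{equation*}
Multiplying by $-\Psi/\lambda$ gives \eqref{eq:linear_pde}.

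The terminal condition follows from $V(T,\mu)=\phi_T(\mu)$, which gives $\Psi(T,\mu)=\exp(-\phi_T(\mu)/\lambda)$. Since the transformation is a bijection between smooth $V$ and smooth positive $\Psi$, the two equations are equivalent, and the converse direction follows by reversing the computation.

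The only step needing care is the cancellation itself, which is routine algebra. The substantive requirement is that the temperature $\lambda$ in the Cole--Hopf transform coincides with the $\lambda$ in \eqref{eq:matching}. I would also state that classical $C^{1,2}$ regularity of $V$ is assumed, as is implicit in \eqref{eq:hjb}.
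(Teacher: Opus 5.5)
Your proposal is correct and matches the paper's proof: express $\nabla_\mu V$ and $\nabla^2_{\mu\mu}V$ in terms of $\Psi$, substitute into the minimized HJB \eqref{eq:hjb_opt}, and use the matching condition \eqref{eq:matching} to cancel the Hessian-induced term $\tfrac{\lambda}{2\Psi^2}(\nabla_\mu\Psi)^\top D_t\nabla_\mu\Psi$ against the control term. Your algebra and the terminal condition check out, and the added remarks on regularity and invertibility of the transform are reasonable but beyond what the paper states.
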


\begin{proof}
From $V = -\lambda\log\Psi$, direct computation gives
\begin{align*}
\nabla_\mu V &= -\frac{\lambda}{\Psi}\nabla_\mu\Psi, \\
\nabla_{\mu\mu}^2 V &= -\frac{\lambda}{\Psi}\nabla_{\mu\mu}^2\Psi
+ \frac{\lambda}{\Psi^2}(\nabla_\mu\Psi)(\nabla_\mu\Psi)^\top.
\end{align*}
Substituting into~\eqref{eq:hjb_opt}, the Hessian trace
produces $(\lambda/2\Psi^2)(\nabla_\mu\Psi)^\top D_t
\nabla_\mu\Psi$, while the quadratic control term yields
$-(\lambda^2/2\Psi^2)(\nabla_\mu\Psi)^\top \bar{G}_t
R^{-1}\bar{G}_t^\top \nabla_\mu\Psi$. Under the matching
condition~\eqref{eq:matching}, these cancel exactly, giving
\eqref{eq:linear_pde}.
\end{proof}
Since~\eqref{eq:linear_pde} is linear, the desirability admits
the following stochastic representation.

\begin{corollary}[Belief-Space Feynman--Kac Representation]
\label{cor:feynman_kac}
Under Assumptions~\ref{assum:gain_schedule}
and~\ref{assum:matching}, let $\mathbb{Q}$ denote the
probability measure corresponding to the uncontrolled
belief-mean dynamics. Then
\begin{equation}
    \Psi(t, \mu)
        = \E^{\mathbb{Q}}\biggl[
            \exp\!\biggl(
                -\frac{1}{\lambda}\int_t^T
                    q_s(\mu_s)\,ds
                - \frac{\phi_T(\mu_T)}{\lambda}
            \biggr)
            \;\bigg|\; \mu_t = \mu
        \biggr].
    \label{eq:feynman_kac}
\end{equation}
\end{corollary}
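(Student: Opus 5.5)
The plan is to apply the classical Feynman--Kac theorem to the linear backward PDE~\eqref{eq:linear_pde} of Theorem~\ref{thm:linearization}. Under $\mathbb{Q}$, the uncontrolled belief mean obeys
\[
d\mu_s = f(\mu_s)\,ds + L_s\,d\beta_s, \qquad s\in[t,T],\quad \mu_t=\mu,
\]
which is \eqref{eq:mean_nonlinear} with $u\equiv 0$. Its generator is $\mathcal{A}_s\Psi = (\nabla_\mu\Psi)^\top f(\mu) + \tfrac12\tr(D_s\nabla^2_{\mu\mu}\Psi)$, because $L_sL_s^\top = D_s$ is deterministic by Assumption~\ref{assum:gain_schedule}. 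Equation~\eqref{eq:linear_pde} can then be rewritten as $\partial_s\Psi + \mathcal{A}_s\Psi - (q_s/\lambda)\Psi = 0$ with terminal data $\Psi(T,\cdot)=e^{-\phi_T/\lambda}$. This is exactly the form covered by Feynman--Kac with killing rate $q_s/\lambda\ge 0$.

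To carry this out, define the discount process $Z_s := \exp(-\frac1\lambda\int_t^s q_r(\mu_r)\,dr)$ and apply It\^o's formula to $Z_s\Psi(s,\mu_s)$ on $[t,T]$. The $ds$ terms are $Z_s(\partial_s\Psi + \mathcal{A}_s\Psi - (q_s/\lambda)\Psi)\,ds$, and these vanish by \eqref{eq:linear_pde}. What remains is the stochastic integral $\int_t^T Z_s(\nabla_\mu\Psi)^\top L_s\,d\beta_s$. Taking conditional expectations given $\mu_t=\mu$ and inserting the terminal condition gives \eqref{eq:feynman_kac}.

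The main obstacle is justifying that this stochastic integral is a true martingale rather than only a local martingale. I would first note that $q_s\ge 0$ and $\phi_T\ge 0$, since $\bar q,\bar\phi$ are nonnegative and these are Gaussian expectations of them. Hence $0<Z_s\le 1$ and $0<\Psi\le 1$ along the path. I would then localize with stopping times $\tau_k$, the exit times from balls of radius $k$, apply optional stopping to the stopped process, and pass $k\to\infty$ by bounded convergence, which is available because $Z_s\Psi(s,\mu_s)$ is uniformly bounded by~$1$. This needs regularity of two objects: $\Psi$ must be a classical $C^{1,2}$ solution, and the SDE must be well-posed. For the SDE I would impose standard Lipschitz and linear-growth conditions on $f$ and continuity of $L_s$ in $s$, stated as standing regularity assumptions implicit in the derivation of \eqref{eq:hjb}. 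Non-explosion then makes $\tau_k\to\infty$ a.s.

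Alternatively, the identity can be read in reverse. The right-hand side of \eqref{eq:feynman_kac} defines a function, and under these regularity assumptions standard results show that it solves \eqref{eq:linear_pde}. Uniqueness of bounded solutions then identifies it with $\Psi$, so the representation holds even without assuming a priori smoothness of $V$.
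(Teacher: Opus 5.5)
Your proposal is correct and follows the paper's route. The paper gives no separate proof: it invokes the classical Feynman--Kac formula for the linear PDE~\eqref{eq:linear_pde} of Theorem~\ref{thm:linearization}, and your It\^o argument on $Z_s\Psi(s,\mu_s)$ is the standard verification that fills this in, with localization and bounded convergence justified by $0<Z_s\Psi(s,\mu_s)\le 1$ (which holds because $V\ge 0$ as an optimal cost-to-go with nonnegative costs). The only soft spot is your closing alternative: $D_t$ is typically rank-deficient here, so under just the Lipschitz/continuity assumptions you list the Feynman--Kac expectation need not be $C^{1,2}$, and dropping smoothness of $V$ would require viscosity-solution arguments rather than classical uniqueness; your main argument does not depend on this.
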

This expresses the desirability as an expectation over uncontrolled belief-mean trajectories, enabling Monte Carlo approximation.

\begin{remark}[Risk-Sensitive Extension]
\label{rem:risk_sensitive}
The expected cost $q(\mu,\Sigma) = \E[\bar{q}(x)]$ can be replaced by $q_{\mathrm{RS}} = \E[\bar{q}(x)] + \frac{\theta}{2}\Var[\bar{q}(x)]$, obtained from the cumulant expansion of the entropic risk measure~\cite{jacobson1973optimal,whittle1990risk}.
The parameter $\theta > 0$ controls risk aversion, and $\theta = 0$ recovers the risk-neutral cost.
The PIC structure is preserved since only the running cost is modified.
\end{remark}



\section{MPPI-Belief Algorithm}
\label{sec:algorithm}

\noindent This section develops a discrete-time Monte Carlo approximation of the belief-space control law and presents the resulting MPPI-Belief algorithm.

Under an Euler--Maruyama discretization of the uncontrolled
belief-mean dynamics~\eqref{eq:mean_nonlinear} with
$u_t \equiv 0$ at times $t_k = k\Delta t$, the transitions are
\begin{equation}
    \mu_{k+1} = \mu_k + f(\mu_k)\,\Delta t
        + L_k\,\epsilon_k, \quad
    \epsilon_k \sim \Normal(0, I_{r_k}\,\Delta t),
    \label{eq:discrete_dynamics}
\end{equation}
where $L_k L_k^\top = D_{t_k}$ and $r_k := \rank(D_{t_k})$.

\begin{theorem}[Discrete-Time Control Law]
\label{thm:discrete_control}
Under Assumptions~\ref{assum:gain_schedule}--\ref{assum:matching}, and freezing the coefficients over the first time step (standard in MPPI~\cite{williams2017information}), the control law at step $k = 0$ is
\begin{equation}
    u_0^*
        = \sum_{i=1}^N \tilde{w}^{(i)}
        \frac{R^{-1}\bar{G}_0^\top\,
            (\bar{G}_0 R^{-1}\bar{G}_0^\top)^\dagger\,
            L_0\,\epsilon_0^{(i)}}
            {\Delta t},
    \label{eq:discrete_control}
\end{equation}
where $\bar{G}_0 := G(\bar{\mu}_0)$, the normalized importance weights are $\tilde{w}^{(i)} := w^{(i)}/\sum_j w^{(j)}$ with $w^{(i)} := \exp(-S^{(i)}/\lambda)$, and $S^{(i)}$ is the path cost of trajectory $i$. For rank-deficient $D_0$, the pseudoinverse restricts the score to $\range(D_0)$, ensuring well-posedness for underactuated systems.
\end{theorem}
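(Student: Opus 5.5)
We start from the optimal feedback $u^*(t,\mu) = -R^{-1}\bar{G}_t^\top\nabla_\mu V(t,\mu)$ derived from~\eqref{eq:hjb}. By Theorem~\ref{thm:linearization}, $\nabla_\mu V = -\lambda\,\nabla_\mu\Psi/\Psi$, so
\[
u^*(t,\mu) = \lambda R^{-1}\bar{G}_t^\top \frac{\nabla_\mu\Psi(t,\mu)}{\Psi(t,\mu)}.
\]
The plan is to express $\nabla_\mu\Psi$ at $t=0$ as an expectation over discretized uncontrolled paths, using the Feynman--Kac representation of Corollary~\ref{cor:feynman_kac}. Then we replace that expectation by the self-normalized Monte Carlo estimator with weights $\tilde{w}^{(i)}$.

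First discretize~\eqref{eq:feynman_kac} with the transitions~\eqref{eq:discrete_dynamics}. This gives $\Psi(0,\mu_0)\approx \E[\exp(-S/\lambda)]$, where $S$ is the Riemann-sum path cost. Condition on the first step, with coefficients frozen at $t_0$. Then $\mu_1 = \mu_0 + f(\mu_0)\Delta t + L_0\epsilon_0$, and $\Psi(0,\mu_0)=\E[\,e^{-q_0(\mu_0)\Delta t/\lambda}\Psi(t_1,\mu_1)]$. The leading-order $\mu_0$-dependence enters through the Gaussian transition density of $\mu_1$, which is supported on $\mu_0+f(\mu_0)\Delta t+\range(D_0)$ and has covariance $D_0\Delta t$.

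Differentiate that density in $\mu_0$, neglecting $O(\Delta t)$ terms from $\nabla f$ and $\nabla q_0$. This gives the Gaussian score $D_0^\dagger(\mu_1-\mu_0-f\Delta t)/\Delta t = D_0^\dagger L_0\epsilon_0/\Delta t$, in the likelihood-ratio (score-function) form. Hence $\nabla_\mu\Psi/\Psi \approx \E[w\,D_0^\dagger L_0\epsilon_0]/(\Delta t\,\E[w])$ with $w=\exp(-S/\lambda)$. For rank-deficient $D_0$, we interpret the density relative to Lebesgue measure on the affine subspace $\mu_0+f\Delta t+\range(D_0)$. The pseudoinverse then arises naturally, and the gradient is taken only along $\range(D_0)$.

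Next substitute Assumption~\ref{assum:matching}, $D_0 = \lambda \bar{G}_0R^{-1}\bar{G}_0^\top$. This gives $D_0^\dagger = \lambda^{-1}(\bar{G}_0R^{-1}\bar{G}_0^\top)^\dagger$, so the factor $\lambda$ cancels:
\[
u_0^* \approx \frac{\E\bigl[w\,R^{-1}\bar{G}_0^\top(\bar{G}_0R^{-1}\bar{G}_0^\top)^\dagger L_0\epsilon_0\bigr]}{\Delta t\,\E[w]}.
\]
Replacing both expectations by sample averages over $N$ trajectories yields~\eqref{eq:discrete_control}. Well-posedness follows because $\range(L_0)=\range(D_0)=\range(\bar{G}_0)$ by Proposition~\ref{prop:matching_feasibility}(ii), so the pseudoinverse acts as a genuine inverse on the relevant subspace.

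The main obstacle is the score step in the degenerate case. Justifying differentiation of a singular Gaussian kernel and the use of $D_0^\dagger$ requires showing two things. First, perturbations of $\mu_0$ outside $\range(D_0)$ contribute nothing at leading order. Second, the drift-gradient terms are genuinely $O(\Delta t)$ relative to the $O(\Delta t^{-1/2})$ score. I would handle this by changing variables to coordinates adapted to $\range(D_0)\oplus\ker(D_0)$ and differentiating only along the range. This is justified heuristically by the coefficient freezing the statement already assumes.
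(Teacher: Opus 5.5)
Your proposal is correct and follows the paper's proof essentially step for step. Cole--Hopf gives $u^*=\lambda R^{-1}\bar{G}_t^\top\nabla_\mu\Psi/\Psi$, Feynman--Kac with the likelihood-ratio score of the frozen first-step Gaussian kernel gives $D_0^\dagger L_0\epsilon_0/\Delta t$, matching cancels $\lambda$, and self-normalized Monte Carlo gives the stated law.

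One correction to your ``main obstacle'' paragraph: derivatives of $\Psi$ along $\ker(D_0)$ are generally $O(1)$, not negligible, and coefficient freezing does not remove them. They drop out exactly because matching forces $\ker(D_0)=\range(\bar{G}_0)^\perp=\ker(\bar{G}_0^\top)$, so $\bar{G}_0^\top\nabla_\mu\Psi=\bar{G}_0^\top D_0D_0^\dagger\nabla_\mu\Psi$, and only the range-directional derivatives are needed, which is exactly what the pseudoinverse score estimates.
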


\begin{proof}
From the first-order optimality condition
$u^* = -R^{-1}\bar{G}_t^\top \nabla_\mu V$ and
$\nabla_\mu V = -(\lambda/\Psi)\nabla_\mu\Psi$,
\[
u^* = \lambda R^{-1}\bar{G}_t^\top
\frac{\nabla_\mu \Psi}{\Psi}.
\]
By the Feynman--Kac representation (Corollary~\ref{cor:feynman_kac}), the quantity
$\nabla_{\mu_0}\log\Psi$ can be written as a cost-weighted expectation of the score function of the uncontrolled transition density. 
For the discrete transition \eqref{eq:discrete_dynamics} with covariance $D_0\Delta t$, the score restricted to $\range(D_0)$ is $D_0^\dagger L_0 \epsilon_0 / \Delta t$. 
Substituting the matching condition $D_0 = \lambda \bar{G}_0 R^{-1}\bar{G}_0^\top$ gives
\[
\lambda R^{-1}\bar{G}_0^\top D_0^\dagger
=
R^{-1}\bar{G}_0^\top
(\bar{G}_0 R^{-1}\bar{G}_0^\top)^\dagger.
\]
Approximating the resulting expectation with $N$ Monte Carlo samples yields~\eqref{eq:discrete_control}.
\end{proof}

The complete MPPI-Belief algorithm is given in
Algorithm~\ref{alg:mppi}.

\begin{algorithm}[t]
\caption{MPPI-Belief}
\label{alg:mppi}
\begin{algorithmic}[1]
\REQUIRE Belief $(\mu_0, \Sigma_0)$, number of samples $N$,
    horizon length $H = T/\Delta t$, step size $\Delta t$,
    temperature $\lambda$
\STATE Precompute $\{\Sigma_k, K_k, D_k, L_k\}_{k=0}^{H}$
    by solving the Riccati equation along the nominal
    trajectory $\bar{\mu}$
\STATE $\bar{G}_0 \leftarrow G(\bar{\mu}_0)$
\STATE $M_0 \leftarrow R^{-1}\bar{G}_0^\top
    (\bar{G}_0 R^{-1}\bar{G}_0^\top)^\dagger L_0$
\FOR{$i = 1, \ldots, N$}
    \STATE $\mu^{(i)}_0 \leftarrow \mu_0$, \;
        $S^{(i)} \leftarrow 0$
    \FOR{$k = 0, \ldots, H-1$}
        \STATE Sample $\epsilon_k^{(i)} \sim
            \Normal(0, I_{r_k}\Delta t)$
        \STATE $\mu^{(i)}_{k+1} \leftarrow \mu^{(i)}_k
            + f(\mu^{(i)}_k)\Delta t
            + L_k\epsilon_k^{(i)}$
        \STATE $S^{(i)} \leftarrow S^{(i)}
            + q_{t_k}(\mu^{(i)}_k)\Delta t$
    \ENDFOR
    \STATE $S^{(i)} \leftarrow S^{(i)}
        + \phi_T(\mu^{(i)}_H)$
    \STATE $w^{(i)} \leftarrow \exp(-S^{(i)}/\lambda)$
\ENDFOR
\STATE Normalize: $\tilde{w}^{(i)} \leftarrow
    w^{(i)}/\sum_j w^{(j)}$
\RETURN $u_0^* = M_0
    \sum_i \tilde{w}^{(i)}\epsilon_0^{(i)} / \Delta t$
\end{algorithmic}
\end{algorithm}

In practice, Algorithm~\ref{alg:mppi} is applied in a receding-horizon fashion: at each time step, the belief $(\mu_0,\Sigma_0)$ is updated from the latest filter estimate, and only the control $u_0^*$ is applied before replanning.
In contrast to particle-filter-based methods such as
PIPF~\cite{hoshino2025path}, which combine filtering particles with per-particle path integral sampling, MPPI-Belief operates on a single belief-mean trajectory with $N$ samples.


\section{Numerical Validation}
\label{sec:experiments}

\noindent We evaluate on a 2D double-integrator with state
$x = [p_x, p_y, v_x, v_y]^\top$, control $u \in \R^2$, and
position-only observations ($C = [I_2\; 0_{2\times2}]$).
The observation noise is
$\sigma_o(p_x) = |p_x - x_{\text{light}}|/\sqrt{2} + 0.1$
with $x_{\text{light}} = 5$\,m, producing a light region where
observations are most accurate.
Two obstacles at $(3,\pm1)$\,m with radius $0.65$\,m form a
corridor in a high-noise region; process noise is
$\sigma_w = 0.30$. All sampling-based methods use $N{=}500$
samples, $H{=}30$, $\Delta t{=}0.1$\,s, and $\lambda{=}1$.

We compare MPPI-Belief with the risk-sensitive cost
of Remark~\ref{rem:risk_sensitive} ($\theta{=}1$) against EKF-iLQG~\cite{todorov2005generalized},
CE-MPPI ($\theta{=}0$), and
PIPF~\cite{hoshino2025path} ($K{=}50$ particles, $L{=}200$
samples). All sampling-based baselines share the same cost function, temperature, and horizon, while differing primarily in planning and filtering architecture.\footnote{Our PIPF uses a bootstrap particle filter and the same MPPI importance weighting as CE-MPPI, rather than the KL-optimal proposal of~\cite{hoshino2025path}; thus, our implementation
should be viewed as a simplified PIPF-style baseline rather than an exact reproduction.}

\begin{table}[b]
\centering
\small
\caption{Comparative performance of four controllers on the
gradient light-dark domain (200 Monte Carlo trials).}
\label{tab:exp_baseline}
\begin{tabular}{lccc}
\toprule
Method & Cost & Collision & Time (ms) \\
\midrule
EKF-iLQG    & $46.1 \pm 6.2$ & 24.5\% & 148 \\
CE-MPPI     & $50.5 \pm 6.2$ & 23.0\% & 13 \\
PIPF        & $47.6 \pm 5.5$ & 18.0\% & 296 \\
MPPI-Belief & $56.5 \pm 5.2$ & \textbf{0.0\%} & 18 \\
\bottomrule
\end{tabular}
\end{table}

Table~\ref{tab:exp_baseline} and Fig.~\ref{fig:exp_trajectories}
show that MPPI-Belief is the only method achieving zero collisions
over 200 trials, while the baselines incur 18--25\%.
The risk-sensitive belief cost couples obstacle proximity with
covariance, encouraging detours through the informative region
before traversing the corridor.
The optimized matching residual is $\varepsilon^* = 0.79$,
indicating that useful performance is retained despite approximate matching.
\begin{figure}[t]
    \centering
    \includegraphics[width=\columnwidth]{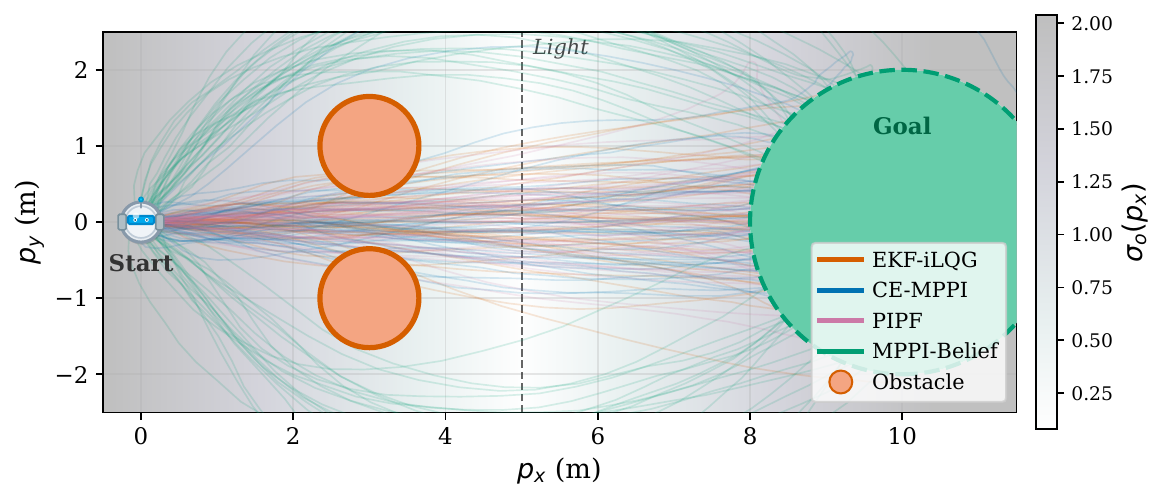}
    \caption{Sample trajectories (50 per method) on the gradient
    light-dark domain. Background shading indicates observation noise $\sigma_o(p_x)$; the light region near $p_x{=}5$\,m provides accurate observations.}
    \label{fig:exp_trajectories}
\end{figure}
\begin{figure}[t]
    \centering
    \includegraphics[width=\columnwidth]{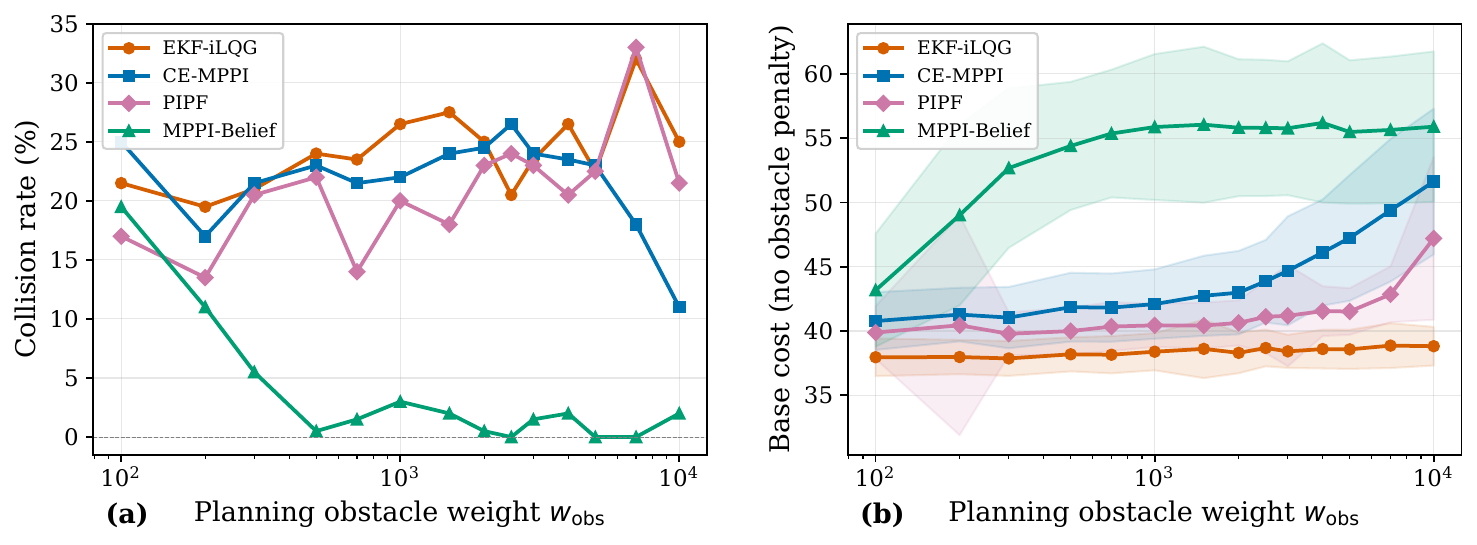}
\caption{Obstacle-weight sweep (200 trials per point).
(a)~Collision rate versus $w_{\mathrm{obs}}$.
(b)~Base cost excluding obstacle penalties.
MPPI-Belief attains zero collisions at sufficiently large
$w_{\mathrm{obs}}$, unlike the baselines.}
    \label{fig:exp_fairness}
\end{figure}

In Fig.~\ref{fig:exp_fairness}, we sweep the planning obstacle
penalty weight $w_{\mathrm{obs}} \in [10^2, 10^4]$ for each
method to test whether the safety improvement is merely a
tuning artifact.
The results show that increasing $w_{\mathrm{obs}}$ across two
orders of magnitude has little effect on the baselines, which
remain between 11\% and 33\% collision rate, whereas
MPPI-Belief drops below 1\% by $w_{\mathrm{obs}}{=}500$ and
reaches 0\% at 2500. No baseline reaches 0\% at any weight
tested, indicating that collisions are driven by estimation
uncertainty rather than insufficient obstacle penalty.

Finally, we isolate the role of the risk-sensitive parameter by sweeping $\theta \in \{0, 0.5, 1, 2, 5\}$ over 200 trials per setting (detailed results omitted for brevity). At $\theta{=}0$ (expected cost only), the collision rate is 23.0\% with mean clearance 0.11\,m. Even moderate risk aversion ($\theta{=}0.5$) reduces collisions to 2.5\% and increases clearance to 0.80\,m. The best performance occurs at $\theta{=}1$, achieving 0\% collisions and 0.82\,m clearance with modest cost increase ($56.6$ vs.\ $51.3$).
For $\theta \geq 2$, collision rates rise slightly (0.5--1.5\%), suggesting that excessive risk aversion can induce over-conservatism. The variance penalty $(\theta/2)\Var[\bar{q}]$ is thus the primary driver of safe behavior, amplifying obstacle costs in high-uncertainty regions.

\section{Conclusion}
\label{sec:conclusion}
\noindent This paper studied path integral control for partially observed systems via a Gaussian belief-space approximation. We showed that the matching condition generally fails in full belief space, and characterized when it can hold in the reduced Gaussian setting. This yields a belief-space path integral formulation and the MPPI-Belief algorithm. Numerical experiments showed improved safety relative to certainty-equivalent and particle-filter-based baselines at practical computational cost. Future work will consider nonlinear observation models, non-Gaussian belief representations, and hardware validation.


\bibliographystyle{IEEEtran}
\bibliography{references}

\end{document}